\DeclareMathOperator{\Tr}{Tr}
\title[Convergence of Flow PG for lQR]{Convergence of Flow-Policy Gradient Learning for Linear Quadratic Regulator Problems}
\newtheorem{Theo}{Theorem}
\newtheorem{Lem}{Lemma}
\newtheorem{Rem}{Remark}
\begin{document}

\maketitle

\begin{abstract}%
Flow $Q$-learning has recently been introduced to integrate learning from expert demonstrations into an actor-critic structure. Central to this innovation is the ``the one-step policy'' network, which is optimized through a $Q$-function that is regularized with the behavioral cloning from expert trajectories, allowing learning more expressive policies using flow-based generative models. In this paper, we studied the convergence property and stabilizablity of the one-step policy during learning for linear quadratic problems under the offline settings. Our theoretical results are based on a new formulation of the one-step policy loss based on the average expected cost, and regularized with the behavioral cloning loss. Such a formulation allows us to tap into existing strong theoretical results from the policy gradient theorem to study the convergence properties of the one-step policy. We verify our theoretical finding with simulation results on a linearized inverted pendulum.%
\end{abstract}

\begin{keywords}%
  Actor-Critic methods, Linear Quadratic Regulator, Flow $Q$-Learning%
\end{keywords}

\section{Introduction}
Actor-critic methods \citep{konda2000actor, lillicrap2015continuous, schulman2015trust, mnih2016asynchronous}  are widely used in Reinforcement Learning (RL) and have been successfully applied in various domains, including robotics \citep{kober2013reinforcement}, game playing \citep{silver2016mastering}, and autonomous systems \citep{kiran2021deep}.
Actor-Critic structures are particularly interesting due to their ability to handle continuous state and action spaces and combining the pros of Temporal Difference (TD) learning and Policy Gradient (PG) methods. In these structures, the actor is responsible for selecting actions using an actor network, while the critic evaluates the actions taken by the actor by estimating the value function or action-value function ($Q$-function). This separation of roles is particularly efficient for control problems which typically have continuous action spaces as actor-critic methods directly learn a parameterized policy and as such avoid the need for computationally expensive value function optimization in TD approaches \citep{sutton1998reinforcement}. Despite their success, actor-critic methods face several challenges, including stability and convergence issues when applied to closed-loop dynamical systems \citep{henderson2018deep}. 

In many RL problems, expert trajectories are available, which can be leveraged to improve learning efficiency and performance. This can be done through imitation learning techniques, such as behavioral cloning \citep{torabi2018behavioral} or inverse reinforcement learning \citep{ng2000algorithms}, typically in offline RL setups which allows training effective policies from pre-collected datasets without further environment interactions \citep{levine2020offline}. However, as datasets have grown larger, the distribution of the expert trajectories get more complicated, making it more challenging to learn effective policies using purely offline methods. Flow $Q$-learning \citep{park2025flow} is a recent actor-critic approach that integrates learning from expert trajectories into the RL framework and allows learning more expressive policies using flow-based generative models. Flow $Q$-learning uses normalizing flows to model the policy distribution, allowing for more expressive and flexible policies, and a better exploration.

Integrating learning from expert demonstration with RL is particularly useful in control problems where the model of the system is not known or is inaccurate. In such scenarios, expert trajectories can provide valuable information about the system dynamics and help guide the learning process. For instance, in robotics, expert demonstrations can be used to teach robots how to perform complex tasks, such as grasping objects or navigating through environments \citep{tsuji2025survey}. In autonomous driving, expert trajectories can be used to train self-driving cars to navigate safely and efficiently in complex traffic scenarios \citep{dursun2025imitation}. Even though imitation learning from expert demonstrations, such as flow Q-learning, has shown promising empirical results in various control problems, the convergence and stability properties of such methods for closed-loop dynamical systems have not been well studied.


In this paper, we study the theoretical properties of a behavioral cloning-based algorithm inspired by the flow $Q$-learning in the context of Linear Quadratic Regulator (LQR) control problems. We focus on the convergence and stability properties of the one-step policy that optimizes the average expected cost, regularized via behavioral cloning from expert trajectories. Formulating based on the average expected cost rather than the $Q$ function, which was originally used in \citep{park2025flow}, allows us to apply the policy gradient framework \citep{silver2014deterministic} to analyze the convergence of the one-step policy. Furthermore, In flow Q-learning, evaluating the gradient of the Q-function with respect to the policy parameters can be challenging in practice (\cite{d2020learn}). Common obstacles include biased estimates, high variance, and the fact that the Q-function may be non-differentiable with respect to the policy parameters. Our formulation avoids these issues by directly using the policy gradient framework. Note that the established results are different from \citep{NEURIPS2019_9713faa2} as the behavioral cloning is absent in \citep{NEURIPS2019_9713faa2}.


The contribution of this paper is as follows. For the LQR problem, we prove that the one-step policy learned via optimizing the average expected cost regularized with behavioral cloning from expert trajectories converges to the optimal policy at a linear rate. This is achieved by showing that the one-step policy loss is gradient dominant. We also prove that the one-step policy remains stabilizing during the course of learning. Our formulation of the one-step loss based on the average expected cost instead of $Q$ function which is originally used in \citep{park2025flow} is of independent interest as it allows us to use the policy gradient framework, beyond the LQR problem studied in this paper.

The organization of this paper is as follows. In Section \ref{sec:back}, we give the notation and background on the flow $Q$-learning and in Section \ref{Sec:LQR}, we define the LQR problem. In Section \ref{Sec:Flow_PG}, we discuss the flow $Q$-learning algorithm for the LQR problem and establish convergence and stability. In Section \ref{Sec:sim}, we give the simulation results of implementing the flow-policy gradient algorithm on a linearized inverted pendulum and in Section \ref{Sec:conc}, we conclude the paper. The longer proofs of helper lemmas are given in Appendix \ref{App:Proofs}.

\section{Notation and Flow $Q$-learning}
\label{sec:back}
\subsection{Notation and preliminaries}
\paragraph{Notation:} Let $\mathbb{R}^{m \times n}$ denote the set of all real-valued matrices of dimension $m \times n$. A  symmetric positive (semi)-definite matrices of size $n \times n$ is denoted by $P > (\geq ) 0$. The Kronecker product of two matrices $A$ and $B$ is written as $A \otimes B$. For a matrix $A$, the Frobenius norm is defined as $\|A\|_F = \sqrt{\sum_{i,j} |a_{ij}|^2}$, and the spectral norm is given by $\|A\|_2$, which corresponds to the largest singular value of $A$. Let $\rho(A)$ denote the spectral radius of $A$; i.e. $\rho(A) = \max \{ |\lambda| \in \mathbb{R} : \lambda \text{ is an eigenvalue of } A \}$. The vectorization of a matrix $A \in \mathbb{R}^{m \times n}$ is denoted by $\text{vec}(A) \in \mathbb{R}^{mn}$, which stacks the columns of $A$ into a single column vector. The trace of a square matrix $A$ is denoted by $\Tr(A)$, which is the sum of its diagonal elements. The Kronecker product of two matrices $A$ and $B$ is written as $A \otimes B$. Throughout this paper, we use the subscript $k$ to refer to the time step for the dynamical system and the superscript $t$ to refer to the iteration index in the RL algorithm.

\paragraph{L-smooth function:} A function $f: \mathbb{R}^n \rightarrow \mathbb{R}$ is L-smooth if for all $x,y \in \mathbb{R}^n$
\begin{align*}
    \|  \nabla f(x) - \nabla f(y) \|_{2} \leq L \| x-y \|_{2},
\end{align*}
where $L>0$ is called the smoothness parameter. If the function $f$ is double differentiable, then the L-smoothness is equivalent to $\nabla^{2} f(x) \leq L I_n$ for all $x \in \mathbb{R}^n$.

\paragraph{Gradient dominant function:} A function $f: \mathbb{R}^n \rightarrow \mathbb{R}$ is gradient dominant if there exists a constant $\mu >0$ such that for all $x \in \mathbb{R}^n$
\begin{align*}
    f(x)-f(x^{*}) \leq \frac{1}{2\mu} \| \nabla f(x) \|_{2}^{2},
\end{align*}
where $x^{*}$ is a global minimizer of $f$.

\paragraph{MDP:} We consider a Markov Decision Process (MDP) defined by the tuple $(\mathcal{X}, \mathcal{U}, P, c, \Sigma_0)$, where $\mathcal{X}  \rightarrow \mathbb{R}^{n}$ and $\mathcal{U} \rightarrow \mathbb{R}^{m}$ denote the state and action spaces, $P: \mathcal{X} \times \mathcal{U} \times \mathcal{X} \rightarrow [0,1]$ is the transition probability function such that $P(x' \mid x, u)$ denotes the probability of transitioning to state $x'$ from state $x$ under action $u$, $c: \mathcal{X} \times \mathcal{U} \rightarrow \mathbb{R}$ is the stage cost, and $\Sigma_0$ is the initial state distribution. The aim is to learn a parameterized policy $\pi_{\varphi}(.|x),\: \mathcal{X} \rightarrow \mathcal{U}$ with parameter $\varphi$ that minimizes the average cost performance index
\begin{align}
    J(\pi) = \lim_{T \to \infty} \frac{1}{T} \mathbb{E} \left[ \sum_{k=0}^{T-1} c(x_k, u_k) \right].
\end{align}
The action-value function for the policy $\pi$ is defined as
\begin{align}
    Q (x,u)=\mathbb{E}_{\substack{x \sim \rho_{\varphi}\\ u \sim \pi_{\varphi}}}\sum_{k=0}^{T-1}[c(x_k, u_k)|x_0=x,u_0=u]-  J(\pi),
\end{align}
where $\rho_{\varphi}$ refers to the stationary distribution induced by the sampling actions according to policy $\pi_{\varphi}$. To learn the average cost, one needs to assume ergodicity of the MDP under the policy $\pi_{\varphi}$ \citep{puterman2014markov}.

\subsection{Flow $Q$-learning}
\label{subsec:back:flow}
Flow $Q$-learning \citep{park2025flow} uses actor-critic structure where a critic network $Q_{\varphi_c}$, a behavioral cloning policy $\pi_{\varphi_b}$ and a one-step policy $\pi_{\varphi_o}$ with parameters $\varphi_{c},\:\varphi_{a},\: \varphi_{o}$, respectively, are trained. Initially, the data points $(x,u,x',c)$ from some expert's demonstrations are recorded in a replay buffer $\mathcal{D}$. The flow $Q$-learning can be implemented fully offline; i.e. learning from the replay buffer $\mathcal{D}$ or in an offline-online setup where data from online interaction is continuously added to the replay buffer. In the following, we give the loss functions for training the networks in the flow $Q$-learning algorithm

\paragraph{The critic network $Q_{\varphi_c}$:} This is trained using the critic loss
\begin{align}
    l_{\varphi_{c}}=\mathbb{E}_{\substack{x,u,c,x' \sim \mathcal{D} \\ u' \sim \pi_{\varphi_o}} }[(Q_{\varphi_c}(x, u) +c(x,u) - Q_{\bar{\varphi_c}}(x', u'))^2],
\end{align}
where $\bar{\varphi}_c$ denotes the parameters of the target critic network.
\paragraph{The behavioral cloning (BC) policy $\mu_{\varphi_b}$:} This is trained via flow matching.
Let $v_{\varphi_b}$ denotes the velocity function in flow matching trained using the BC flow matching loss
\begin{align}
    l_{\varphi_b}=\mathbb{E}_{\substack{x,a^{1} (=u) \sim \mathcal{D} \\ a^{0} \sim \mathcal{N}(0, I_m), \\ t \sim Unif([0,1])}}[\|v_{\varphi_b}(t, x, a^t) - (a^1 - a^0)\|_{2}^{2}].
\end{align}
Here, $a^t = (1-t)a^{0} + ta^{1}$. Then, the behavioral cloning policy $\mu_{\varphi_b}$ is given by $\mu_{\varphi_b}(x,z)=v_{\varphi_b}(1, x, z)$. Intuitively, the behavioral cloning policy $\mu_{\varphi_b}$ maps the noise $z$, sampled from the standard normal distribution to the action $a$ via an ODE with velocity $v_{\varphi_b}(t, x, z)$ .

\paragraph{The one-step policy $\mu_{\varphi_o}$:} This is trained with the following loss
\begin{align}
    l_{\varphi_o}=\mathbb{E}_{\substack{x \sim \mathcal{D}\\u \sim \mu_{\varphi_o}\\z \sim \mathcal{N}(0,W_{z})}}[Q_{\varphi_c}(x, u) + \alpha \|\mu_{\varphi_o}(x, z) - \mu_{\varphi_b}(x, z)\|_{2}^{2}],
    \label{eq:loss_one_step_flowQ}
\end{align}
where $\alpha >0$ is the regularization parameter. Due to the presence of the both behavioral cloning and one-step policy, the established results on the convergence of the actor-critic structures \citep{NEURIPS2019_9713faa2} are not applicable.
\section{Linear Quadratic Problem}
\label{Sec:LQR}
This section gives the preliminary results on the LQR problem. We consider the following fully observable linear time-invariant system with unknown parameters $\theta \in \mathbb{R}^{d_{\theta}}$,
\begin{align}
    x_{k+1}= A x_{k} +B u_k +w_k,
    \label{eq:lin:dynamics}
\end{align}
where $x_k \in \mathbb{R}^{n}, u_k \in \mathbb{R}^{m}$ are state and control input and $w_k \in \mathbb{R}^{n}\sim \mathcal{N}(0, W_w)$. We assume that the dynamics $\theta=(A,B)$ are unknown. The stage cost for the linear dynamical system in \eqref{eq:lin:dynamics} is considered to be quadratic as follows, 
\begin{align}
    c(x_k,u_k) = \frac{1}{2} x_k^{\top} R_x x_k +\frac{1}{2} u_k^{\top} R_u u_k,
    \label{Eq:Quad:cost}
\end{align}
where $R_x \geq 0 $ and $R_u >0$. For linear systems and quadratic costs, we are interested in learning a linear state feedback controller with the gain $K \in \mathcal{K}$, where $\mathcal{K}$ is the feasible set of the stabilizing controller gains, i.e., $\mathcal{K}=\{K \in \mathbb{R}^{m \times n} \mid \rho(A+BK) <1 \}$. In other words, we consider the following structure of the policy function, $\pi = K x+z_k, \: z_k \sim \mathcal{N}(0,W_z),\: K \in \mathcal{K}$. When the dynamics are unknown, it is common to add  noise to the linear state feedback controller to promote exploration. Note that the policy $\pi$ is analogous to the one-step policy $\mu_{\varphi_o}$ from the flow Q-learning. 

Finally, the average expected cost associated with the policy $\pi$ is given as
\begin{align}
    J(K) =\mathbb{E}_{\substack{x\sim \rho_K\\z \sim \mathcal{N}(0,W_z)}}\left[\frac{1}{T}\sum_{k=1}^{T} x_k^{\top} R_x x_k +u_k^{\top} R_u u_k\right].
    \label{eq:Ave:Quad:cost}
\end{align}
Optimizing \eqref{eq:Ave:Quad:cost} for the linear dynamical system in \eqref{eq:lin:dynamics} defines an LQR problem.

\paragraph{Steady state covariance of the state variable:}
We assume that the initial state for the linear dynamical system in \eqref{eq:lin:dynamics} is zero $x_0=0$. Note that we can assume this without loss of generality as the effect of the nonzero initial state vanishes in the landscape of the average cost. The next lemma shows that the covariance of the state variable $x_k$ under the policy $\pi = K x_k+z_k$ converges to a steady state value.

\begin{Lem}
Consider the linear system in \eqref{eq:lin:dynamics} and assume that the controller gain $K \in \mathcal{K}$, where $\mathcal{K}=\{K \in \mathbb{R}^{m \times n} \mid \rho(A+BK) <1 \}$. Then, the covariance of the state variable $x_k$ under the policy $\pi = K x_k+z_k$ converges to a steady state value
\begin{align}
    \Sigma=(A+BK)\Sigma(A+BK)^{\top}+B W_{z} B^{\top} +W_{w}.
    \label{eq:Sigma_steady}
\end{align}
\end{Lem}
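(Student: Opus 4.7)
The plan is to reduce the evolution of the state covariance to a discrete Lyapunov iteration and then invoke the Schur stability of the closed-loop matrix to obtain convergence to a unique fixed point.

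First I would substitute the policy $\pi = Kx_k + z_k$ into the dynamics in \eqref{eq:lin:dynamics} to obtain the closed-loop recursion $x_{k+1} = (A+BK)x_k + Bz_k + w_k$. Denoting $A_c = A+BK$ and $\Sigma_k = \mathbb{E}[x_k x_k^\top]$, and using that $z_k$ and $w_k$ are zero-mean and mutually independent as well as independent of $x_k$ (so cross terms vanish), I would derive the covariance recursion
\begin{align*}
    \Sigma_{k+1} = A_c \Sigma_k A_c^\top + B W_z B^\top + W_w,
\end{align*}
with initial condition $\Sigma_0 = 0$ since $x_0 = 0$ is assumed.

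Next, I would vectorize the recursion using the identity $\mathrm{vec}(A_c \Sigma_k A_c^\top) = (A_c \otimes A_c)\,\mathrm{vec}(\Sigma_k)$, giving
\begin{align*}
    \mathrm{vec}(\Sigma_{k+1}) = (A_c \otimes A_c)\,\mathrm{vec}(\Sigma_k) + \mathrm{vec}(BW_zB^\top + W_w).
\end{align*}
Since $K \in \mathcal{K}$, $\rho(A_c) < 1$, and by the spectral property of the Kronecker product the eigenvalues of $A_c \otimes A_c$ are products of those of $A_c$, so $\rho(A_c \otimes A_c) = \rho(A_c)^2 < 1$. Hence $I - A_c \otimes A_c$ is invertible and the linear iteration is a contraction in any induced norm, so $\mathrm{vec}(\Sigma_k)$ converges to a unique fixed point $\mathrm{vec}(\Sigma) = (I - A_c \otimes A_c)^{-1} \mathrm{vec}(BW_zB^\top + W_w)$, which upon unvectorizing yields precisely the Lyapunov equation \eqref{eq:Sigma_steady}.

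There is no real obstacle here; the argument is standard linear-systems material. The only subtlety worth being careful about is verifying that $x_k$ is uncorrelated with $z_k$ and $w_k$ so the cross terms in the covariance expansion vanish, which follows immediately because $x_k$ is a measurable function of the past noises $\{z_j, w_j\}_{j<k}$ and the noise sequence is i.i.d.
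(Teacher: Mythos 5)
Your proof is correct. Up to the covariance recursion $\Sigma_{k+1}=(A+BK)\Sigma_k(A+BK)^\top+BW_zB^\top+W_w$ you follow exactly the same path as the paper; the two arguments only diverge in how convergence of this recursion is concluded. The paper writes the explicit solution $\Sigma_k=\sum_{j=0}^{k-1}(A+BK)^j(BW_zB^\top+W_w)(A+BK)^{j\top}$ (citing Hewer) and reads off the limit, whereas you vectorize the iteration and use $\rho\bigl((A+BK)\otimes(A+BK)\bigr)=\rho(A+BK)^2<1$ to get convergence to the unique fixed point $(I-A_c\otimes A_c)^{-1}\mathrm{vec}(BW_zB^\top+W_w)$. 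Your route is more self-contained and gives uniqueness of the steady state for free; the paper's series form makes the monotonicity and positive semidefiniteness of $\Sigma_k$ more transparent. One small imprecision: $\rho(A_c\otimes A_c)<1$ does not make the affine map a contraction in \emph{any} induced norm, only in \emph{some} induced norm (or, equivalently, it guarantees $(A_c\otimes A_c)^k\to 0$); the conclusion is unaffected, but the phrasing should be adjusted.
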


\begin{proof}
Assume $K$ is stabilizing. Then, $x_k \sim \mathcal{N}(0,\Sigma_k)$ where $\Sigma_k$ denote the covariance of the state $x_k$ for the linear system under policy $\pi = K x_k+z_k$ at time $k$. The covariance $\Sigma_{k+1}$ is given by
\begin{align*}
    \mathbb{E}_{\substack{w\sim \mathcal{N}(0,W_w)\\z \sim \mathcal{N}(0,W_z)}}&[(x_{k+1})(x_{k+1})^{\top}]\\
    &=\mathbb{E}_{\substack{w\sim \mathcal{N}(0,W_w)\\z \sim \mathcal{N}(0,W_z)}}[((A+BK)x_{k})+Bz_k+w_k)((A+BK)(x_{k})+B z_k+w_k)^{\top}].
\end{align*}
As a result, one gets
\begin{align}
    \Sigma_{k+1}=(A+BK)\Sigma_k(A+BK)^{\top}+B W_{z} B^{\top} +W_{w}.
    \label{Eq:sigma_k:recursive}
\end{align}
Since $(A+BK)$ is stable, the solution to \eqref{Eq:sigma_k:recursive} can be written as \citep{hewer1971iterative}
    \begin{align*}
        \Sigma_{k}=\sum_{j=0}^{k-1}(A+BK)^{j}(B W_{z} B^{\top} +W_{w})(A+BK)^{j\top}.
    \end{align*}
The stationary covariance $\Sigma$ is given by replacing both $\Sigma_{k+1}$ and $\Sigma_k$ in \eqref{Eq:sigma_k:recursive} with $\Sigma$.
\end{proof}

\paragraph{Average expected cost:}
Assume that $K \in \mathcal{K}$, where $\mathcal{K}=\{K \in \mathbb{R}^{m \times n} \mid \rho(A+BK) <1 \}$. Then there exists a unique positive definite solution $P$ the following Bellman equation
\begin{align}
    P = (R_x+K^{\top} R_u K) + (A+BK)^{\top} P (A+BK).
    \label{eq:Pk_Bellman}
\end{align}
The average expected cost associated with the policy $\pi = K x_k+z_k,\quad z_k \sim \mathcal{N}(0,W_{z})$  is derived next.

\begin{Lem}
    Consider the linear system in \eqref{eq:lin:dynamics} with the quadratic cost in \eqref{Eq:Quad:cost} and the average expected cost in \eqref{eq:Ave:Quad:cost}. The average expected cost $J(\pi)$ associated with the policy $\pi= K x+z, \quad z \sim \mathcal{N}(0,W_z),\: K \in \mathcal{K}$, where $\mathcal{K}=\{K \in \mathbb{R}^{m \times n} \mid \rho(A+BK) <1 \}$ is given by
    \begin{align}
    \begin{split}
        J(\pi)&= \Tr (P \Sigma- (A+BK)^{\top} P (A+BK)\Sigma) + \Tr(R_u W_z),\\
        &= \Tr(P BW_z B^{\top} +P W_w) + \Tr(R_u W_z),
    \end{split}
	\label{eq:Ave:One_step_policy}
\end{align}
where $\Sigma$ and $P$ are given in \eqref{eq:Sigma_steady} and \eqref{eq:Pk_Bellman}.
\end{Lem}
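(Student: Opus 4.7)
The plan is to compute the per-step expected cost under the stationary distribution and then invoke the Bellman equation \eqref{eq:Pk_Bellman} and the Lyapunov equation \eqref{eq:Sigma_steady} to rewrite it in the two desired forms. First I would substitute $u_k = Kx_k + z_k$ into the stage cost and expand, using that $z_k \sim \mathcal{N}(0,W_z)$ is zero-mean and independent of $x_k$. The cross term $2 x_k^{\top} K^{\top} R_u z_k$ vanishes in expectation, and $\mathbb{E}[z_k^{\top} R_u z_k] = \Tr(R_u W_z)$. Together with $\mathbb{E}[x_k^{\top} R_x x_k] = \Tr(R_x \Sigma_k)$, this gives a per-step expected cost of $\Tr((R_x + K^{\top} R_u K)\Sigma_k) + \Tr(R_u W_z)$.

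Next, because $K \in \mathcal{K}$ is stabilizing and $x_0=0$, the previous lemma gives $\Sigma_k \to \Sigma$ geometrically, so in the Ces\`aro average implicit in \eqref{eq:Ave:Quad:cost} the transient contribution is $O(1)$ and the limit equals the stationary per-step cost $\Tr((R_x + K^{\top} R_u K)\Sigma) + \Tr(R_u W_z)$. I would then apply the Bellman equation to replace $R_x + K^{\top} R_u K$ by $P - (A+BK)^{\top} P (A+BK)$. This immediately yields the first claimed form
\begin{align*}
J(\pi) = \Tr\!\bigl(P\Sigma - (A+BK)^{\top} P (A+BK)\Sigma\bigr) + \Tr(R_u W_z).
\end{align*}

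Finally, to obtain the second form I would use the cyclic property of the trace to rewrite $\Tr((A+BK)^{\top} P (A+BK)\Sigma) = \Tr(P(A+BK)\Sigma(A+BK)^{\top})$ and substitute the Lyapunov relation \eqref{eq:Sigma_steady}, i.e.\ $(A+BK)\Sigma(A+BK)^{\top} = \Sigma - B W_z B^{\top} - W_w$. The $\Tr(P\Sigma)$ terms then cancel, leaving $\Tr(P B W_z B^{\top} + P W_w) + \Tr(R_u W_z)$ as claimed.

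The argument is largely mechanical once the stationarity of the covariance is in hand from the previous lemma. The only delicate point is justifying that the time average over a trajectory starting from $x_0=0$ reduces to the stationary per-step cost; this follows from $\rho(A+BK)<1$, which makes the deviation $\Sigma_k - \Sigma$ decay geometrically so that its sum is bounded and washes out after division by $T$. Beyond this, the algebra is a direct pairing of the Bellman equation (to express the stage cost in terms of $P$) with the Lyapunov equation (to eliminate $\Sigma$ from the final expression).
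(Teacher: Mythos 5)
Your proposal is correct and follows essentially the same route as the paper's proof: expand the stage cost under $u=Kx+z$, kill the cross term, take the stationary limit to get $\Tr((R_x+K^{\top}R_uK)\Sigma)+\Tr(R_uW_z)$, and substitute the Bellman equation for the first form and the Lyapunov equation for the second. Your justification of the Ces\`aro limit via the geometric decay of $\Sigma_k-\Sigma$ is slightly more explicit than the paper's appeal to ergodicity, but it is the same argument in substance.
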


\begin{proof}
We derive the average cost by directly using the policy $\pi= K x+z$ in the average cost \eqref{eq:Ave:Quad:cost}
\begin{align*}
	J(\pi) =&\lim_{T \rightarrow \infty} \frac{1}{T} \mathbb{E}_{\substack{x\sim \rho_{K}\\z \sim \mathcal{N}(0,W_z)}} \sum_{k=0}^{T-1}[ x_{k}^{\top} R_x x_k +(Kx_{k}+z_k)^{\top} R_u (Kx_k+z_k)]\\
    =&\lim_{T \rightarrow \infty} \frac{1}{T} \mathbb{E}_{\substack{x\sim \rho_{K}\\z \sim \mathcal{N}(0,W_z)}}\sum_{k=0}^{T-1}[x_{k}^{\top} R_x x_k+2z_{k}^{\top} R_u K x_k +x_k^{\top} K^{\top} R_u K x_k+z_k^{\top}  R_u z_k].
\end{align*}
Using $ \mathbb{E}_{\substack{x\sim \rho_{K}\\z \sim \mathcal{N}(0,W_z)}}[x_k^{\top} z_k]=0,\:\Tr(AB)=\Tr(BA)$, one gets 
\begin{align*}
	J(\pi)=&\lim_{T \rightarrow \infty} \frac{1}{T} \sum_{k=0}^{T-1}[ \Tr ((R_x+K^{\top} R_u K)  \mathbb{E}_{ x\sim \rho_{K}}[x_kx_{k}^{\top}]) + \Tr(R_u \mathbb{E}_{z \sim \mathcal{N}(0,W_z)}[z_k z_k^{\top}])].
\end{align*}
By ergodicity and using the stationary state distribution $\mathbb{E}_{x\sim \rho_{K}}[x_{k}x_{k}^T]=\Sigma$, one gets
\begin{align*}
	J(\pi)= \Tr ((R_x+K^{\top} R_u K)  \Sigma) + \Tr(R_u W_z)].
\end{align*}
By replacing $R_x+K^{\top} R_u K$ from \eqref{eq:Pk_Bellman}, the average cost can equivalently written as \eqref{eq:Ave:One_step_policy}. 
\end{proof}

\section{Flow-Policy Gradient Learning}
\label{Sec:Flow_PG}
In this paper, we primarily use the steps from the flow-$Q$ learning algorithm discussed in Subsection \ref{subsec:back:flow} with a very useful modification that we learn the one step policy by optimizing the average expected cost regularized with the behavior cloning loss as
\begin{align}
    l_{\varphi_o}=\mathbb{E}_{\substack{x \sim \mathcal{D}\\z \sim \mathcal{N}(0,W_z)}}\left[J(\mu_{\varphi_o})+ \frac{\alpha}{2}\|\mu_{\varphi_o}(x, z) - \mu_{\varphi_b}(x, z)\|_{2}^{2}\right].
    \label{eq:loss_one_step_flow_PG}
\end{align}
The formulation in \eqref{eq:loss_one_step_flow_PG} allows us to see learning the one-step policy as a policy gradient problem. The parameters of the one-step policy; i.e. $\varphi_o$ are updated by gradient descent over $l_{\varphi_o}$
\begin{align}
    \varphi_o^{t+1}=\varphi_o^{t}-\eta\nabla_{\varphi_o} l_{\varphi_o}.
    \label{Eq:GD_loss_onestep}
\end{align}

\subsection{Flow-policy Gradient for LQR problem}
In this subsection, we study the flow-policy gradient algorithm for the LQR problem. We first give the structure for the $Q$-function and one-step policy and then we study the converge of the parameters of the one step policy. We also theoretically prove that the one-step policy remain stabilizing during the course of learning.

\subsubsection{$Q$-function and one-step policy structure}
For the linear system in \eqref{eq:lin:dynamics} with the quadratic cost in \eqref{Eq:Quad:cost} and the average cost formulation in \eqref{eq:Ave:Quad:cost}, the one-step policy is considered as $\mu_{K} (x)= K x+z, \: z \sim \mathcal{N}(0,W_z)$ and the $Q_{\varphi_c}$ function is given by \citep{adib2023linear}

\begin{align}
    Q_{\varphi_c}(x,u) =\frac{1}{2} \begin{bmatrix}
        x\\
        u
    \end{bmatrix}^{\top} \begin{bmatrix}
        S_{xx}(\theta) & S_{ux}^{\top}(\theta)\\
        S_{ux}(\theta) & S_{uu}(\theta)
    \end{bmatrix}\begin{bmatrix}
        x\\
        u
    \end{bmatrix}-J(\mu_{K}), 
    \label{Eq:q_function}
\end{align}
where $J(\mu_{K})$ is the average expected cost associated with the one-step policy, which is given in \eqref{eq:Ave:One_step_policy} and $K$ is the controller gain to be optimized. 

\subsubsection{One-step loss}
The one-step policy loss function for the LQR problem is given as
\begin{align}
    l_{\varphi_o}^{\text{lin}}(K)=\mathbb{E}_{\substack{x\sim \mathcal{D}\\z \sim \mathcal{N}(0,W_z)}}\left[J(\mu_{K})+ \frac{\alpha}{2}(Kx+z-\mu_{\varphi_b}(x, z))^{\top}(Kx+z-\mu_{\varphi_b}(x, z))\right].
    \label{eq:loss_one_step_LQ}
\end{align}
The controller gain $K$ is updated via gradient descent over $ l_{\varphi_o}^{\text{lin}}$
\begin{align}
    K^{t+1} = K^{t} -\eta \nabla_{K} l_{\varphi_o}^{\text{lin}}(K).
    \label{Eq:gradient:K}
\end{align}
\paragraph{The loss $l_{\varphi_o}$ is coercive:} One can easily verify that $ l_{\varphi_o}^{\text{lin}}(K)$ is coercive as $l_{\varphi_o}^{\text{lin}}(K)$ is quadratic in $\mu_{K}$ and $\mu_{K}$ is linear in $K$; i.e. $\mu_{K} (x)= K x+z$. 

\subsection{Convergence and stabilizability of the one-step policy}
As discussed in \citet{fazel2018global}, optimizing the average cost $J(\mu_{K})$ for the LQR problem via gradient descent over the policy gain $K$ is a non-convex problem. This also holds for the one-step loss $l_{\varphi_o}^{\text{lin}}$ in \eqref{eq:loss_one_step_LQ} as it contains $J(\mu_{K})$ along with the the behavioral cloning loss. In this subsection we study the convergence by showing that $l_{\varphi_o}^{\text{lin}}$ in \eqref{eq:loss_one_step_LQ} is gradient dominant. The stabilizability of the updated controller is also established in this subsection.

\subsubsection{Convergence}

To prove that  $ l_{\varphi_o}^{\text{lin}}$ is gradient dominant, several helper lemmas are given below with the proofs in the appendices \ref{App:sub:gradient}-\ref{App:sub:Lsmooth}. The proof of the Theorem \ref{Theo:gradient_dominant} is also given in the appendix \ref{App:sub:gradient_dominant}.

\begin{Lem} 
The gradient of the one-step policy loss $ l_{\varphi_o}^{\text{lin}}$ (\ref{eq:loss_one_step_LQ}) is given by
\begin{align}
\begin{split}
     \nabla_K l_{\varphi_o}^{\text{lin}}(K)=&( S_{uu}(\theta) (K-K^*) )\mathbb{E}_{\substack{x \sim \mathcal{D}}}[x   x^{\top}]
    + \alpha (K-K^*) \mathbb{E}_{\substack{x \sim \mathcal{D}}}[x   x^{\top}],
\end{split}
   \label{eq:nabla_loss_2}
\end{align}
where $K^*$ is the optimal gain $\nabla_K l_{\varphi_o}^{\text{lin}}(K^*)=0$.
\label{Lem:gradient}
\end{Lem}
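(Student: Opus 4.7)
The plan is to differentiate $l_{\varphi_o}^{\text{lin}}$ in \eqref{eq:loss_one_step_LQ} term by term, reduce each contribution to a linear-in-$K$ expression scaled by the data covariance $\mathbb{E}_{x\sim\mathcal{D}}[xx^{\top}]$, and then recover the claimed $(K-K^{*})$ factorization by subtracting the first-order optimality condition that defines $K^{*}$.

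For the $J(\mu_K)$ contribution, the natural route is to interpret it via the quadratic critic $Q_{\varphi_c}$ in \eqref{Eq:q_function} and invoke the deterministic policy gradient theorem of \citet{silver2014deterministic} with the data distribution $\mathcal{D}$ in place of the closed-loop stationary distribution, giving $\nabla_K J(\mu_K) = \mathbb{E}_{x\sim\mathcal{D},\,z\sim\mathcal{N}(0,W_z)}\!\bigl[\nabla_u Q_{\varphi_c}(x,u)\big|_{u=Kx+z}\,x^{\top}\bigr]$. Because $\nabla_u Q_{\varphi_c}(x,u) = S_{ux}(\theta)\,x + S_{uu}(\theta)\,u$ is affine, the substitution $u=Kx+z$ together with $z\perp x$ and $\mathbb{E}[z]=0$ collapses the expectation to $\bigl(S_{ux}(\theta)+S_{uu}(\theta)K\bigr)\,\mathbb{E}_{x\sim\mathcal{D}}[xx^{\top}]$. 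For the behavioral-cloning regularizer, direct differentiation of the quadratic form yields $\alpha\,\mathbb{E}\bigl[(Kx+z-\mu_{\varphi_b}(x,z))\,x^{\top}\bigr]$, which by the same independence argument simplifies to $\alpha K\,\mathbb{E}_{x\sim\mathcal{D}}[xx^{\top}] - \alpha\,\mathbb{E}[\mu_{\varphi_b}(x,z)\,x^{\top}]$.

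To conclude, I would sum these two pieces and use the definition of $K^{*}$. Writing $\Sigma_{\mathcal{D}} := \mathbb{E}_{x\sim\mathcal{D}}[xx^{\top}]$, the total gradient is affine in $K$ with leading term $(S_{uu}(\theta)+\alpha I)K\,\Sigma_{\mathcal{D}}$ and a constant term $S_{ux}(\theta)\Sigma_{\mathcal{D}}-\alpha\,\mathbb{E}[\mu_{\varphi_b}(x,z)\,x^{\top}]$. Under a standard persistence-of-excitation condition $\Sigma_{\mathcal{D}}\succ 0$, and since $S_{uu}(\theta)+\alpha I \succ 0$, the equation $\nabla_K l_{\varphi_o}^{\text{lin}}(K^{*})=0$ pins down $K^{*}$ uniquely. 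Subtracting that identity from the expression for $\nabla_K l_{\varphi_o}^{\text{lin}}(K)$ cancels all $K$-independent terms and leaves $(S_{uu}(\theta)+\alpha I)(K-K^{*})\,\Sigma_{\mathcal{D}}$; splitting the product into its $S_{uu}(\theta)$ and $\alpha I$ parts gives exactly \eqref{eq:nabla_loss_2}.

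The main obstacle is the first step: the \emph{true} gradient of the long-run average cost $J(\mu_K)$ naturally carries the closed-loop stationary covariance $\Sigma_K$ from \eqref{eq:Sigma_steady}, not the data covariance $\Sigma_{\mathcal{D}}$. Bridging this gap cleanly requires reading $J(\mu_K)$ inside \eqref{eq:loss_one_step_LQ} through the off-policy deterministic policy gradient with the critic $Q_{\varphi_c}$, so I would flag this interpretation explicitly at the top of the proof and use it consistently in the subsequent smoothness and gradient-dominance arguments.
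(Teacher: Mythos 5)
Your proposal is correct and follows essentially the same path as the paper: the same two-term decomposition of $l_{\varphi_o}^{\text{lin}}$, the same direct differentiation of the behavioral-cloning regularizer, and the same final step of subtracting the stationarity condition $\nabla_K l_{\varphi_o}^{\text{lin}}(K^*)=0$ to produce the $(K-K^*)$ factorization. The one technical difference is in how you reach the intermediate identity $\nabla_K \mathbb{E}[J(\mu_K)]=\mathbb{E}_{x\sim\mathcal{D},\,z}\left[\nabla_u Q_{\varphi_c}(x,u)\,x^{\top}\right]$: you invoke the deterministic policy gradient form directly, whereas the paper starts from the score-function form $\mathbb{E}\left[\nabla_K\log\Pi_{\varphi_o}(u|x)\,Q_{\varphi_c}(x,u)\right]$ with the Gaussian policy and then converts it using Stein's identity $\mathbb{E}_{z}[\nabla_z f(z)]=W_z^{-1}\mathbb{E}_{z}[z f(z)]$; both land on the same expression, and your route is marginally more direct since the critic is exactly quadratic. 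You also correctly flag the substantive interpretive point that the paper glosses over, namely that the expectation is taken over the offline data distribution $\mathcal{D}$ rather than the closed-loop stationary distribution $\rho_K$, which is precisely the off-policy reading under which the stated gradient formula (and the later smoothness and gradient-dominance constants involving $\mathbb{E}_{x\sim\mathcal{D}}[xx^{\top}]$) is exact; your added persistence-of-excitation remark $\mathbb{E}_{x\sim\mathcal{D}}[xx^{\top}]\succ 0$ is implicitly needed by the paper as well, since Theorem \ref{Theo:gradient_dominant} inverts that matrix.
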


\begin{Lem} 
The one-step policy loss $ l_{\varphi_o}^{\text{lin}}$ (\ref{eq:loss_one_step_LQ}) is $L$-smooth where $L$ is given by
\begin{align}
\begin{split}
    L:=\Vert S_{uu}(\theta) + \alpha I \Vert_{2} \Vert\mathbb{E}_{\substack{x \sim \mathcal{D}}}[x   x^{\top}] \Vert_{2}.
\end{split}
\end{align}
\label{Lem:Lsmooth}
\end{Lem}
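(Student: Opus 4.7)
The plan is to leverage the closed-form gradient established in Lemma \ref{Lem:gradient} directly, since $L$-smoothness is equivalent to Lipschitz continuity of the gradient. First, I would rewrite the two terms in \eqref{eq:nabla_loss_2} in a unified form by factoring out $(K-K^*)\mathbb{E}_{x\sim\mathcal{D}}[xx^\top]$ on the right and absorbing the scalar $\alpha$ into the identity:
\begin{align*}
    \nabla_K l_{\varphi_o}^{\text{lin}}(K) = \bigl(S_{uu}(\theta) + \alpha I\bigr)(K-K^*)\,\mathbb{E}_{x\sim\mathcal{D}}[xx^\top].
\end{align*}
This expresses the gradient as an affine function of $K$ of the sandwich form $M_1 (K-K^*) M_2$, where $M_1=S_{uu}(\theta)+\alpha I$ and $M_2=\mathbb{E}_{x\sim\mathcal{D}}[xx^\top]$ are fixed (independent of $K$).

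Next, given two arbitrary controller gains $K_1, K_2 \in \mathbb{R}^{m\times n}$, I would subtract to obtain
\begin{align*}
    \nabla_K l_{\varphi_o}^{\text{lin}}(K_1) - \nabla_K l_{\varphi_o}^{\text{lin}}(K_2) = \bigl(S_{uu}(\theta) + \alpha I\bigr)(K_1-K_2)\,\mathbb{E}_{x\sim\mathcal{D}}[xx^\top],
\end{align*}
and then apply submultiplicativity of the Frobenius norm under left/right multiplication by matrices, namely $\|AXB\|_F \leq \|A\|_2 \|X\|_F \|B\|_2$. This yields
\begin{align*}
    \|\nabla_K l_{\varphi_o}^{\text{lin}}(K_1) - \nabla_K l_{\varphi_o}^{\text{lin}}(K_2)\|_F \leq \|S_{uu}(\theta) + \alpha I\|_2\, \|\mathbb{E}_{x\sim\mathcal{D}}[xx^\top]\|_2\, \|K_1-K_2\|_F,
\end{align*}
which is exactly the $L$-smoothness inequality with the claimed constant.

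There is essentially no obstacle here beyond choosing the norm consistently: the heavy lifting was done in Lemma \ref{Lem:gradient}, where the linearity of $\nabla_K l_{\varphi_o}^{\text{lin}}$ in $K$ was established. One could alternatively verify the bound through the Hessian by noting that since the gradient is affine in $K$, the loss is a (matrix) quadratic in $K$ with constant Hessian $\nabla_K^2 l_{\varphi_o}^{\text{lin}} = \mathbb{E}_{x\sim\mathcal{D}}[xx^\top] \otimes (S_{uu}(\theta) + \alpha I)$ after vectorization, whose spectral norm equals $\|S_{uu}(\theta) + \alpha I\|_2 \|\mathbb{E}_{x\sim\mathcal{D}}[xx^\top]\|_2$ by the standard eigenvalue identity for Kronecker products. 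Either route delivers the stated smoothness parameter.
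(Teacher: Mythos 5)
Your proposal is correct and follows essentially the same route as the paper: both rest on the observation from Lemma \ref{Lem:gradient} that the gradient is affine in $K$ with the sandwich structure $(S_{uu}(\theta)+\alpha I)(K-K^*)\,\mathbb{E}_{x\sim\mathcal{D}}[xx^{\top}]$, and then bound via submultiplicativity. If anything, your version is slightly more careful than the paper's one-line Hessian bound, since you give the explicit Lipschitz inequality $\|AXB\|_F\leq\|A\|_2\|X\|_F\|B\|_2$ and correctly identify the vectorized Hessian as the Kronecker product $\mathbb{E}_{x\sim\mathcal{D}}[xx^{\top}]\otimes(S_{uu}(\theta)+\alpha I)$ rather than the plain matrix product.
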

Using Lemmas \ref{Lem:gradient}-\ref{Lem:Lsmooth}, one can show that  $ l_{\varphi_o}^{\text{lin}}$ is gradient dominant.

\begin{Theo} 
The one-step policy loss $ l_{\varphi_o}^{\text{lin}}$ (\ref{eq:loss_one_step_LQ})  is gradient dominant, i.e.,
\begin{align}
     l_{\varphi_o}(K)-l_{\varphi_o}(K^{*}) \leq \frac{1}{2\mu} \Vert \nabla l_{\varphi_o} \Vert_{F}^{2},
\end{align}
where
\begin{align}
    \mu =\frac{1}{2\Vert (S_{uu}(\theta) + \alpha I)^{-1} \Vert_{F}^{2} \Vert (\mathbb{E}_{x \sim \mathcal{D}}[x   x^{\top}])^{-1} \Vert_{F}^{2} \Vert R_u + \frac{\alpha}{2} I \Vert_{F}}.
    \label{eq:mu}
\end{align}
\label{Theo:gradient_dominant}
\end{Theo}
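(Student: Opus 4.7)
The plan is to derive gradient dominance by combining two bounds that, after multiplication, produce the constant $\mu$ in (\ref{eq:mu}): a lower bound on $\|\nabla_K l_{\varphi_o}^{\text{lin}}\|_F^2$ in terms of $\|K-K^\star\|_F^2$ obtained by inverting Lemma \ref{Lem:gradient}, and an upper bound on the suboptimality gap $l_{\varphi_o}^{\text{lin}}(K)-l_{\varphi_o}^{\text{lin}}(K^\star)$ of order $\|K-K^\star\|_F^2$ with leading constant $\|R_u+\tfrac{\alpha}{2}I\|_F$.

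For the first bound, Lemma \ref{Lem:gradient} factors the gradient as $\nabla_K l_{\varphi_o}^{\text{lin}}(K)=(S_{uu}(\theta)+\alpha I)(K-K^\star)\,\mathbb{E}_{x\sim\mathcal{D}}[x x^{\top}]$. Since $R_u\succ 0$ forces $S_{uu}(\theta)+\alpha I\succ 0$ and $\mathbb{E}_{x\sim\mathcal{D}}[x x^{\top}]\succ 0$ holds under the data-excitation assumption, both outer factors are invertible. Solving for $K-K^\star$ and applying the Frobenius sub-multiplicative inequality $\|ABC\|_F\le \|A\|_F\|B\|_F\|C\|_F$ yields
\begin{equation*}
\|K-K^\star\|_F^2 \;\le\; \bigl\|(S_{uu}(\theta)+\alpha I)^{-1}\bigr\|_F^2\,\bigl\|(\mathbb{E}_{x\sim\mathcal{D}}[x x^{\top}])^{-1}\bigr\|_F^2\,\bigl\|\nabla_K l_{\varphi_o}^{\text{lin}}\bigr\|_F^2.
\end{equation*}

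For the second bound, I expand $l_{\varphi_o}^{\text{lin}}$ around the stationary point $K^\star$ so that the first-order term vanishes and only the pure quadratic contribution survives. The BC regularizer $\tfrac{\alpha}{2}\mathbb{E}[\|Kx+z-\mu_{\varphi_b}\|_{2}^{2}]$ is exactly quadratic in $K$ with curvature $\tfrac{\alpha}{2}I$, and the stage-cost piece $K^\top R_u K$ that enters $J(\mu_K)$ contributes the curvature $R_u$; together they assemble into the matrix $R_u+\tfrac{\alpha}{2}I$ acting on $K-K^\star$. A direct trace/Frobenius bound of the resulting quadratic form should give
\begin{equation*}
l_{\varphi_o}^{\text{lin}}(K)-l_{\varphi_o}^{\text{lin}}(K^\star) \;\le\; \bigl\|R_u+\tfrac{\alpha}{2}I\bigr\|_F\,\|K-K^\star\|_F^2,
\end{equation*}
and chaining this with the first bound delivers $l_{\varphi_o}^{\text{lin}}(K)-l_{\varphi_o}^{\text{lin}}(K^\star)\le \tfrac{1}{2\mu}\|\nabla_K l_{\varphi_o}^{\text{lin}}\|_F^2$ with $\mu$ as in (\ref{eq:mu}).

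The main obstacle is the quadratic upper bound above, because $J(\mu_K)$ is not literally quadratic in $K$: both the Bellman solution $P$ in (\ref{eq:Pk_Bellman}) and the stationary covariance $\Sigma$ in (\ref{eq:Sigma_steady}) depend nonlinearly on $K$. The crucial structural observation is that the policy-gradient computation underlying Lemma \ref{Lem:gradient} exposes the long-horizon matrix $S_{uu}(\theta)=R_u+B^\top P B$, which satisfies $S_{uu}(\theta)\succeq R_u$, so using $S_{uu}(\theta)+\alpha I$ to invert the gradient only enlarges $\|\nabla_K l_{\varphi_o}^{\text{lin}}\|_F^2$, while the one-stage weight $R_u+\tfrac{\alpha}{2}I$ still suffices as the curvature appearing in the cost gap. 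Making this asymmetry quantitative without polluting the bound with extra $\|\Sigma\|_F$ or $\|P\|_F$ factors, so that the dominance constant matches (\ref{eq:mu}) exactly, is where the careful bookkeeping between on-policy and data-distribution expectations (in the same spirit as Lemma \ref{Lem:gradient}) will be needed.
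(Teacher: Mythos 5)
Your skeleton matches the paper's: both arguments chain (i) the identity from Lemma \ref{Lem:gradient}, inverted to express $K-K^{*}=(S_{uu}(\theta)+\alpha I)^{-1}\nabla_K l_{\varphi_o}^{\text{lin}}(K)\,(\mathbb{E}_{x\sim\mathcal{D}}[xx^{\top}])^{-1}$, with (ii) an exact quadratic expression for the gap, $l_{\varphi_o}^{\text{lin}}(K)-l_{\varphi_o}^{\text{lin}}(K^{*})=\Tr\bigl[(K-K^{*})^{\top}(R_u+\tfrac{\alpha}{2}I)(K-K^{*})\,\mathbb{E}_{x\sim\mathcal{D}}[xx^{\top}]\bigr]$, the cross term vanishing by stationarity of $K^{*}$. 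However, the step you yourself flag as the ``main obstacle'' --- justifying curvature $R_u+\tfrac{\alpha}{2}I$ for the gap even though $J(\mu_K)$ depends on $K$ nonlinearly through $P$ and $\Sigma$ --- is a genuine gap, and your proposed resolution via the monotonicity $S_{uu}(\theta)\succeq R_u$ is not how it closes and would not by itself produce the stated constant. The paper resolves it by a different move: in the proof of Theorem \ref{Theo:gradient_dominant} it invokes ergodicity to replace the closed-loop time average defining $J(\mu_K)$ by a one-step expectation under the data distribution, $\mathbb{E}[J(\mu_K)]=\Tr\bigl((R_x+K^{\top}R_uK)\,\mathbb{E}_{x\sim\mathcal{D}}[xx^{\top}]\bigr)+\Tr(R_uW_z)$, which is \emph{exactly} quadratic in $K$ with curvature $R_u$; the matrices $P$ and $\Sigma$ never enter the gap computation at all. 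Without some such identification of the on-policy stationary distribution with $\mathcal{D}$ (note this is the same identification that lets Lemma \ref{Lem:gradient} write the gradient against $\mathbb{E}_{x\sim\mathcal{D}}[xx^{\top}]$), your second inequality is unproven, and the asymmetry you point to between $S_{uu}(\theta)$ in the gradient and $R_u$ in the gap is precisely the symptom of the two different representations of $J$ being used.

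Two smaller bookkeeping points. First, your intermediate bound $l_{\varphi_o}^{\text{lin}}(K)-l_{\varphi_o}^{\text{lin}}(K^{*})\le\|R_u+\tfrac{\alpha}{2}I\|_F\|K-K^{*}\|_F^{2}$ silently drops the factor $\|\mathbb{E}_{x\sim\mathcal{D}}[xx^{\top}]\|_F$ present in the exact trace expression; the paper avoids this by substituting $K-K^{*}$ from \eqref{eq:K_Kstar} \emph{before} taking norms, so that one copy of $(\mathbb{E}_{x\sim\mathcal{D}}[xx^{\top}])^{-1}$ cancels against $\mathbb{E}_{x\sim\mathcal{D}}[xx^{\top}]$ inside the trace. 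Second, if you carry out that cancellation carefully you actually obtain only a single power of $\|(\mathbb{E}_{x\sim\mathcal{D}}[xx^{\top}])^{-1}\|_F$, so the squared power in \eqref{eq:mu} is a (conservative) over-bound rather than the tight constant; your chained version happens to reproduce \eqref{eq:mu} only because the dropped $\|\mathbb{E}_{x\sim\mathcal{D}}[xx^{\top}]\|_F$ and the extra inverse power are traded against each other, which is not a valid derivation.
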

\subsubsection{Stabilizability}
The stabilizability of the one-step policy is guaranteed by the following theorem.
\begin{Theo} 
    Assume that $K^0 \in \mathcal{K}$, where $\mathcal{K}=\{K \in \mathbb{R}^{m \times n} \mid \rho(A+BK) <1 \}$. Then for any $\alpha>0$, the one-step policy $\mu_{\varphi_o}$ learned via the gradient descent in \eqref{Eq:gradient:K} with a step size $\eta < \frac{2}{L}$ is stabilizing. In addition, the convergence rate is linear
\begin{align}
    l_{\varphi_o}^{\text{lin}}(K^{t})-l_{\varphi_o}^{\text{lin}}(K^{*}) \leq (1-2\eta \mu+\eta^2 L \mu)^{t}( l_{\varphi_o}^{\text{lin}}(K^{0})-l_{\varphi_o}^{\text{lin}}(K^{*})).
\end{align}
\end{Theo}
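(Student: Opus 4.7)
The plan is to combine the $L$-smoothness of $l_{\varphi_o}^{\text{lin}}$ from Lemma~\ref{Lem:Lsmooth} with the gradient dominance established in Theorem~\ref{Theo:gradient_dominant} through the standard convergence template for gradient descent on a gradient-dominated smooth function, while using the divergence of $J(\mu_K)$ near the boundary of the stabilizing set to guarantee that the iterates stay inside $\mathcal{K}$.

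First, I apply the standard $L$-smoothness descent inequality at the update $K^{t+1}=K^t-\eta\nabla_K l_{\varphi_o}^{\text{lin}}(K^t)$, which after substituting the gradient step reduces to
\[
l_{\varphi_o}^{\text{lin}}(K^{t+1}) \leq l_{\varphi_o}^{\text{lin}}(K^t) - \eta\Bigl(1-\frac{L\eta}{2}\Bigr)\bigl\|\nabla_K l_{\varphi_o}^{\text{lin}}(K^t)\bigr\|_F^2.
\]
For any step size $\eta<2/L$, the prefactor $\eta(1-L\eta/2)$ is strictly positive, so the loss is strictly decreasing. I then invoke Theorem~\ref{Theo:gradient_dominant} in its equivalent form $\bigl\|\nabla_K l_{\varphi_o}^{\text{lin}}(K^t)\bigr\|_F^2 \geq 2\mu\bigl(l_{\varphi_o}^{\text{lin}}(K^t)-l_{\varphi_o}^{\text{lin}}(K^*)\bigr)$ and subtract $l_{\varphi_o}^{\text{lin}}(K^*)$ from both sides of the descent inequality to obtain the one-step contraction
\[
l_{\varphi_o}^{\text{lin}}(K^{t+1})-l_{\varphi_o}^{\text{lin}}(K^*) \leq (1-2\eta\mu+\eta^2 L\mu)\bigl(l_{\varphi_o}^{\text{lin}}(K^t)-l_{\varphi_o}^{\text{lin}}(K^*)\bigr).
\]
Iterating this inequality from $t=0$ yields the advertised linear rate.

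Stabilizability follows by induction on $t$. Since $K^0\in\mathcal{K}$, the average cost $J(\mu_{K^0})$ and the behavioral cloning penalty are both finite, so $l_{\varphi_o}^{\text{lin}}(K^0)<\infty$. Assuming $K^t\in\mathcal{K}$, the descent inequality above gives $l_{\varphi_o}^{\text{lin}}(K^{t+1})\leq l_{\varphi_o}^{\text{lin}}(K^t)<\infty$. Because $J(\mu_K)$ diverges as $K\to\partial\mathcal{K}$ (together with the coercivity at infinity noted after \eqref{eq:loss_one_step_LQ}), every finite sublevel set of $l_{\varphi_o}^{\text{lin}}$ is a compact subset of $\mathcal{K}$, so $K^{t+1}\in\mathcal{K}$, closing the induction.

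The main obstacle is the interaction between $L$-smoothness and stabilizability: the quadratic upper bound of Lemma~\ref{Lem:Lsmooth} is only meaningful over the stabilizing set where the cost is finite, so one must ensure that a single gradient step never overshoots outside $\mathcal{K}$ before $L$-smoothness can be invoked. This is resolved by running the descent and stabilizability inductions simultaneously: as long as $K^t\in\mathcal{K}$ and $\eta<2/L$, the descent step keeps $l_{\varphi_o}^{\text{lin}}(K^{t+1})$ finite, which keeps $K^{t+1}\in\mathcal{K}$ and allows $L$-smoothness to be applied again at the next iterate.
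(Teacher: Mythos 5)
Your proposal is correct and follows essentially the same route as the paper: the paper's proof simply cites \citet{hu2023toward} (coercivity plus $L$-smoothness for invariance of $\mathcal{K}$, and their Theorem 1.4 for the linear rate), whereas you have unpacked that citation into the explicit descent-lemma-plus-gradient-dominance recursion and the compact-sublevel-set induction. Your closing remark about needing $K^t \in \mathcal{K}$ before invoking smoothness is a fair point of care, though in this paper's formulation the gradient in Lemma~\ref{Lem:gradient} is affine in $K$ with a constant Hessian, so the descent inequality holds globally and the subtlety does not bite.
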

\begin{proof}
    Since $ l_{\varphi_o}^{\text{lin}}$  is coercive and $L$-smooth with constant $L$ (Lemma \ref{Lem:Lsmooth}),  the controller gain learned via the gradient descent in \eqref{Eq:gradient:K} with a step size $\eta < \frac{2}{L}$ is stabilizing; i.e. $K^{t+1} \in \mathcal{K}$ \citep{hu2023toward}. The linear convergence rate follows from \citet{hu2023toward}[Theorem 1.4].
\end{proof}

\section{Simulation Results}
\label{Sec:sim}
For the simulation studies, we considered a linearized inverted pendulum system as a benchmark example. Here the states are the angle and angular velocity of the pendulum, i.e., $x_k = [\theta_k,\ \dot{\theta}_k]^\top$, and the angular velocity is clipped to $|\dot{\theta}_k| \leq 8\,\text{rad/s}$. The control input $u_k$ corresponds to the applied torque and is bounded by $|u_k| \leq 2\,\text{Nm}$. The noise variance is taken as $W_w = 0.0001 I$. The system is linearized around the upright position. For the cost evaluation, we use $R_x = \text{diag}(1, 0.1)$ and $R_u = 0.001$. Furthermore, the system matrices are given by      
\begin{align}
    A = 
    \begin{bmatrix}
        1 & \Delta t \\
        -\tfrac{g}{l}\Delta t & 1 - \tfrac{b}{m l^2}\Delta t
    \end{bmatrix}, 
    \qquad
    B = 
    \begin{bmatrix}
        0 \\
        \tfrac{\Delta t}{m l^2}
    \end{bmatrix},
\end{align}
where $m$ is the mass, $l$ the rod length, $b$ the viscous damping coefficient, $g$ the gravitational constant, and $\Delta t$ the sampling interval. We have used $m = 1$, $l=l$, $b = 0$, $g=10$, and $\Delta t = 0.05$ for the simulation.

To mimic the scenario of learning from an expert's demonstrations, we employ the linearized pendulum model controlled by a reasonably good controller, whose internal mechanism is assumed to be unknown. The resulting state–action trajectories are stored in a replay buffer $\mathcal{D}$. In particular, we generate this dataset using a standard scenario-based MPC scheme from \cite{schildbach2014scenario} with a prediction horizon of $N = 20$ and $20$ sampled scenarios. Here, the scenarios are obtained by drawing realisations of the process noise. In the flow-Policy gradient algorithm, we utilize the offline setting; i.e. we assume no interaction with the true system, and the algorithm relies solely on the stored data in $\mathcal{D}$ to learn the optimal controller $K^*$.

\paragraph{Training and evaluation protocol:}
We consider the critic to be a linear function of the quadratic state–action features and is trained using the SGD optimiser from the PyTorch package with a learning rate of $10^{-3}$ and batch size $256$. A soft target update is applied with $\tau = 0.005$, and the target network is updated every 10 steps. The behavioral cloning (flow) policy is a four-layer network with $[512, 512, 512, 512]$ as hidden dimensions, trained with a learning rate of $10^{-3}$ and $10$ flow-matching steps. The one-step actor policy is a linear state-feedback policy with additive Gaussian noise, i.e., $z \sim \mathcal{N}(0, W_z)$ with $W_z = 0.01 I$. It is trained using a learning rate of $0.1$ and batch size $256$, with behavioral cloning regularization weight $\alpha = 0.10$.

Training is performed for $100$ epochs, and performance is evaluated after each epoch over $50$ rollouts initialized from random states within $[-\pi, \pi]$.

Figure~\ref{fig:flow_loss} shows the flow-matching loss for the behavioral cloning component. The steady decrease of the loss indicates successful learning of the noise-to-action mapping. Furthermore, the norm of the gradient of the one-step policy loss function with respect to the policy parameters is shown in Fig.~\ref{fig:pg_norm}. The gradient-norm decays steadily with the training epochs, which is also consistent with our convergence analysis. Note that all figures report the mean over 50 independent trials; shaded regions show $95\%$ Confidence Intervals (CI) computed as $\text{mean} \pm 1.96\,(\text{std})$ across trials.

\begin{figure}[t]
    \centering
    \begin{minipage}[t]{0.4\linewidth}
        \centering
        \includegraphics[width=\linewidth]{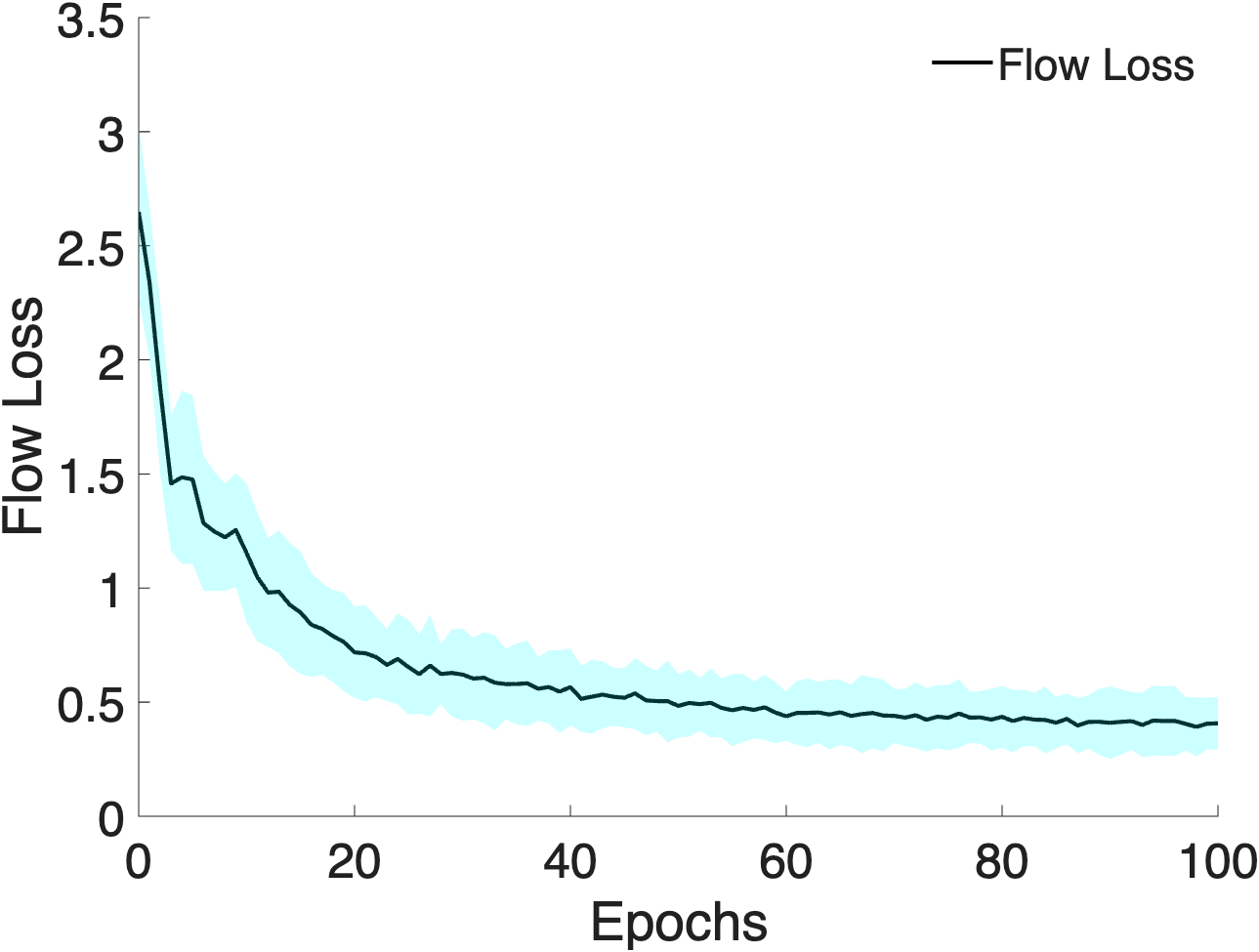}
        \caption{Convergence of the flow-matching loss (mean $\pm$ 95\% CI).}
        \label{fig:flow_loss}
    \end{minipage}\hfill
    \begin{minipage}[t]{0.4\linewidth}
        \centering
        \includegraphics[width=\linewidth]{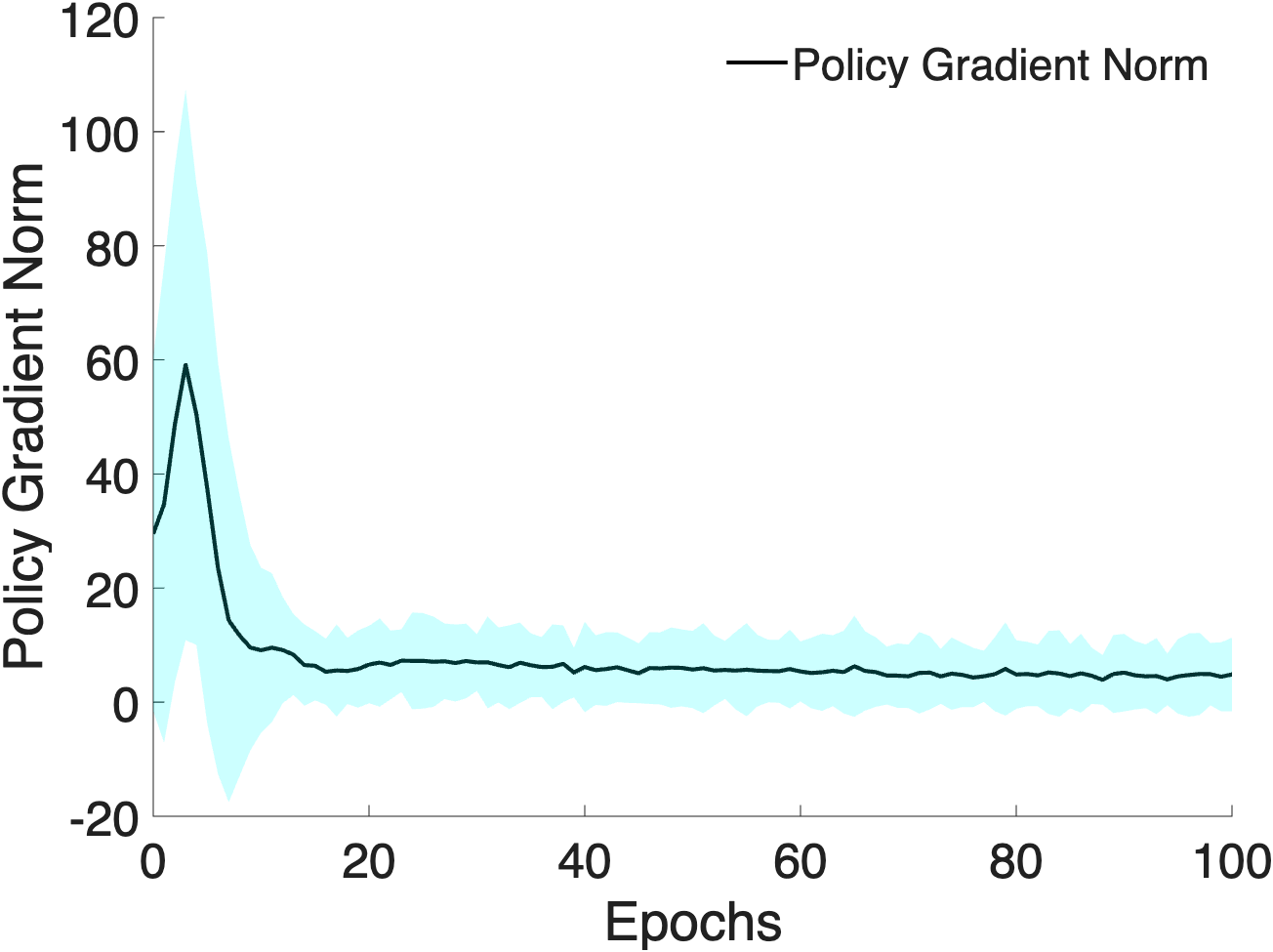}
        \caption{Policy-gradient norm over epochs (mean $\pm$ 95\% CI).}
        \label{fig:pg_norm}
    \end{minipage}
\end{figure}

\begin{figure}[h!]
    \centering
    \includegraphics[width=0.38\linewidth]{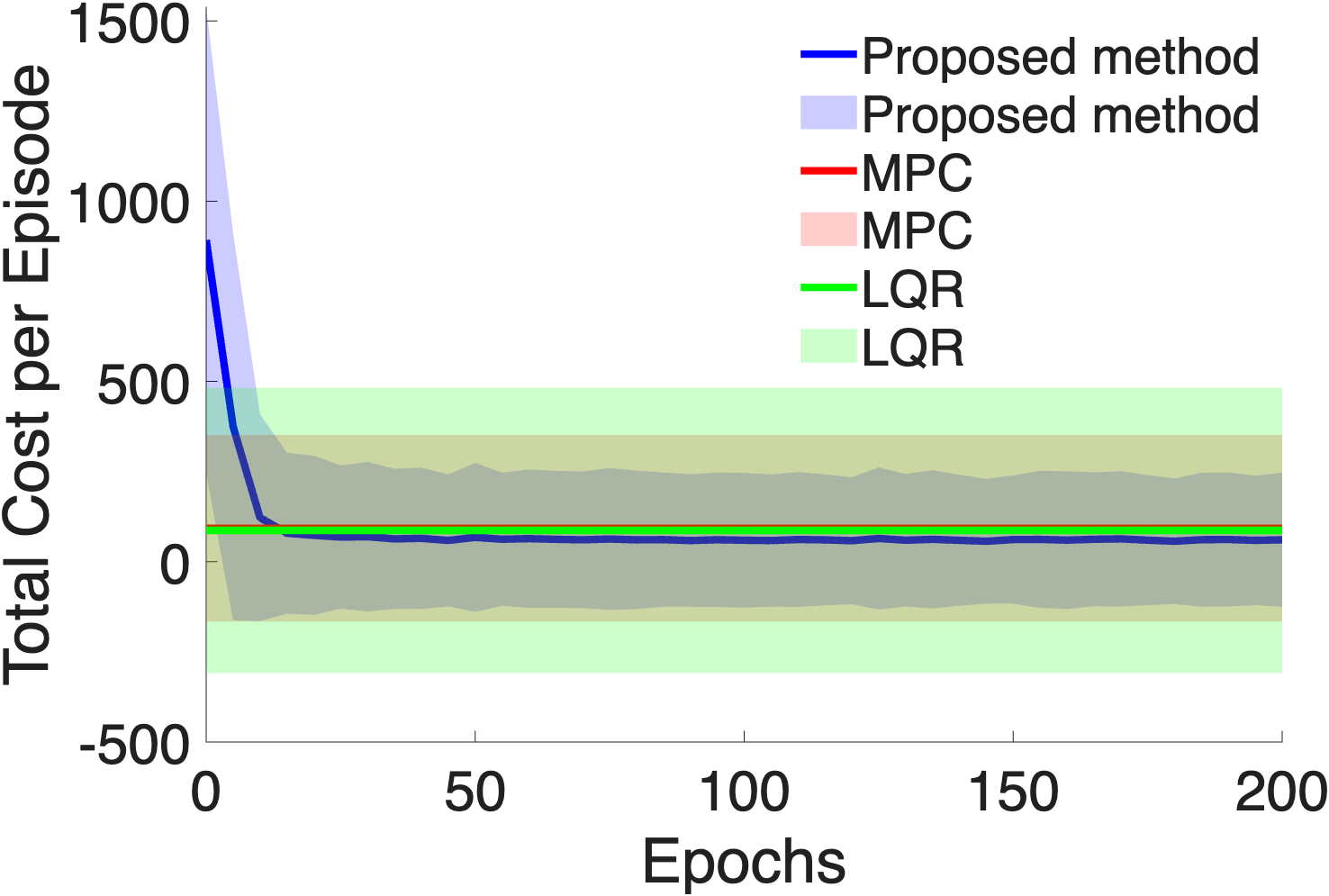}
    \caption{Total episodic costs: Proposed method vs.\ MPC (scenario-based vs.\ LQR. Shaded regions denote 95\% CIs.}
    \label{fig:comp_rewards}
\end{figure}
We compare the proposed method against two model-based baselines, the optimal LQR and scenario-based MPC in Fig.~\ref{fig:comp_rewards}. This figure compares the sum of episodic costs for the proposed method, MPC, and LQR. Here, each episode consists of $200$ time-steps.
We observe that the proposed method, after convergence, slightly outperforms the MPC baseline and performs close to the optimal LQR controller. Note that, both the MPC and LQR baselines have access to the true system dynamics, while the proposed method learns solely from the offline dataset. The performance of the LQR suffers due to the fact that the inverted pendulum system is only approximately linear around the upright position, and the system is randomly initialized in $[-\pi, \pi]$ for each rollout.

\section{Conclusion and Future Work}
\label{Sec:conc}
In this paper, we have studied a behavioral cloning-based AC method, i.e., the flow-$Q$ learning algorithm, and proposed to use average expected cost regularized with behavioral cloning loss to learn the one-step policy. Formulating the one-step policy learning as a policy gradient problem allowed us to theoretically study the convergence and stabilizability of the one-step policy in the context of linear quadratic problems under the offline settings. We proved that the one-step policy loss is gradient-dominated and smooth, thereby stabilizing the learned one-step policy during learning when the learning rate is appropriately selected. In our future work, we plan to extend our convergence analysis to the complete AC algorithm, which means analyzing the convergence of both the $Q$-function and one-step policy together. Furthermore, we also aim to investigate ways to extend our theoretical results to nonlinear systems and the federated learning setup.

\acks{Farnaz Adib Yaghmaie is supported by the Excellence Center at Linköping–Lund in Information Technology (ELLIIT), ZENITH, and partially by Sensor informatics and Decision-making for the Digital Transformation (SEDDIT). This work was partly performed within the Competence Center SEDDIT-Sensor Informatics and Decision Making for the Digital Transformation, supported by Sweden’s Innovation Agency within the research and innovation program Advanced Digitalization. Additionally, Arunava Naha is a Wallenberg AI, Autonomous Systems and Software Program (WASP)- funded faculty member, and the work is partially supported by WASP, funded by the Knut and Alice Wallenberg Foundation.}

\bibliography{federated_ref}
\appendix
\section{Helper Lemmas}
\label{App:Proofs}
In this section, we provide several helper lemmas that will be used in the proof of Theorem \ref{Theo:gradient_dominant}.
\subsection{Proof of Lemma \ref{Lem:gradient}}
\label{App:sub:gradient}
The one-step policy loss $ l_{\varphi_o}^{\text{lin}}$ contains two parts: 
\begin{align}
    &\mathbb{E}_{\substack{z \sim \mathcal{N}(0,W_z)}}[J(\mu_{K})],\\
    &\mathbb{E}_{\substack{x \sim \mathcal{D}\\z \sim \mathcal{N}(0,W_z)}}[ \frac{\alpha}{2}(Kx+z-\mu_{\varphi_b}(x, z))^{\top}(Kx+z-\mu_{\varphi_b}(x, z))]:=L_{\text{Distill}}.
\end{align}
To find $ \nabla_K \mathbb{E}_{z \sim \mathcal{N}(0,W_z)}[J(\mu_{K})]$, we use the policy gradient theorem \citep{silver2014deterministic}
\begin{align*}
    \nabla_K \mathbb{E}_{\substack{z \sim \mathcal{N}(0,W_z)}}[J(\mu_{K})]=\mathbb{E}_{\substack{x \sim \mathcal{D}\\z \sim \mathcal{N}(0,W_z)}}[\nabla \log \Pi_{\varphi_o}(u|x)  Q_{\varphi_c}(x,u)].
\end{align*}
For the one step policy as $\mu_{K} (x)= K x+z, \quad z \sim \mathcal{N}(0,W_z)$, the distribution of the policy is given as $\Pi_{\varphi_o} \sim \mathcal{N}(Kx, W_z)$ and 
\begin{align*}
    \nabla_K \log \Pi_{\varphi_o}(u|x)= W_z^{-1}(u-Kx)x^{\top}= W_z^{-1}zx^{\top}.
\end{align*}
As a result
\begin{align}
    \nabla_K \mathbb{E}_{\substack{z \sim \mathcal{N}(0,W_z)}}[J(\mu_{K})]&=\mathbb{E}_{\substack{x \sim \mathcal{D}\\z \sim \mathcal{N}(0,W_z)}}[W_z^{-1}zx^{\top}  Q_{\varphi_c}(x,u)]
    =\mathbb{E}_{\substack{x \sim \mathcal{D}\\z \sim \mathcal{N}(0,W_z)}}[  \nabla_z Q_{\varphi_c}(x,u)x^{\top}]
    \label{eq:nabla_J_1}
\end{align}
where we used the Stein's identity $\mathbb{E}_{z \sim \mathcal{N}(0,W_z)}[\nabla_z f(z)]=W_z^{-1}\mathbb{E}_{z \sim \mathcal{N}(0,W_z)}[z f(z)]$ \citep{stein1981estimation} to get the last equation. Based on the definition of $Q_{\varphi_c}(x,u)$ in \eqref{Eq:q_function} 
\begin{align*}
    \nabla_z Q_{\varphi_c}(x,u)=S_{ux}(\theta) x + S_{uu}(\theta) u.
\end{align*}
Using the above result in \eqref{eq:nabla_J_1}
\begin{align*}
    \nabla_K \mathbb{E}_{\substack{z \sim \mathcal{N}(0,W_z)}}[J(\mu_{K})]
    &=\mathbb{E}_{\substack{x \sim \mathcal{D}\\z \sim \mathcal{N}(0,W_z)}}[ S_{ux}(\theta)) x   x^{\top}  + S_{uu}(\theta) u x^{\top}]\\
    &=(S_{ux}(\theta) + S_{uu}(\theta) K )\mathbb{E}_{\substack{x \sim \mathcal{D}}}[x   x^{\top}].
\end{align*}
Next, we find $\nabla_K L_{\text{Distill}}$
\begin{align*}
    \nabla_K L_{\text{Distill}}&= \nabla_K \mathbb{E}_{\substack{x \sim \mathcal{D}\\z \sim \mathcal{N}(0,W_z)}}[ \frac{\alpha}{2}(Kx+z-\mu_{\varphi_b}(x, z))^{\top}(Kx+z-\mu_{\varphi_b}(x, z))]\\
    &=\mathbb{E}_{\substack{x \sim \mathcal{D}\\z \sim \mathcal{N}(0,W_z)}}[\alpha (Kx+z-\mu_{\varphi_b}(x,z))x^{\top}]\\
    &=\alpha( K \mathbb{E}_{\substack{x \sim \mathcal{D}}}[x   x^{\top}]- \mathbb{E}_{\substack{x \sim \mathcal{D}\\z \sim \mathcal{N}(0,W_z)}}[\mu_{\varphi_b}(x, z)  x^{\top}]).
\end{align*}
Finally, $\nabla_K l_{\varphi_o}^{\text{lin}}(K)$ reads
\begin{align}
\begin{split}
     \nabla_K l_{\varphi_o}^{\text{lin}}(K)=&(S_{ux}(\theta) + S_{uu}(\theta) K )\mathbb{E}_{\substack{x \sim \mathcal{D}}}[x   x^{\top}] \\
    &+ \alpha( K \mathbb{E}_{\substack{x \sim \mathcal{D}}}[x   x^{\top}]- \mathbb{E}_{\substack{x \sim \mathcal{D}\\z \sim \mathcal{N}(0,W_z)}}[\mu_{\varphi_b}(x, z)  x^{\top}]).
\end{split}
   \label{eq:nabla_loss_1}
\end{align}
The optimal gain, denoted by $K^{*}$ makes the gradient of the one-step policy loss $l_{\varphi_o}^{\text{lin}}(K)$ equal to zero. By subtracting $\nabla_K l_{\varphi_o}^{\text{lin}}(K^*)=0$ from \eqref{eq:nabla_loss_1}, \eqref{eq:nabla_loss_2} is concluded.

\subsection{Proof of Lemma \ref{Lem:Lsmooth}}
\label{App:sub:Lsmooth}
The smoothness of $ l_{\varphi_o}^{\text{lin}}$ can be easily verified by observing that $\nabla_K l_{\varphi_o}^{\text{lin}}$ in \eqref{eq:nabla_loss_2} is linear in $K$. Indeed
    \begin{align*}
        \Vert \nabla^2_K l_{\varphi_o}^{\text{lin}}(K) \Vert_{2}=&\Vert ( S_{uu}(\theta) + \alpha I )\mathbb{E}_{\substack{x \sim \mathcal{D}}}[x   x^{\top}] \Vert_{2}
        \leq \Vert S_{uu}(\theta) + \alpha I \Vert_{2} \Vert \mathbb{E}_{\substack{x \sim \mathcal{D}}}[x   x^{\top}] \Vert_{2}.
    \end{align*}
\subsection{Proof of Theorem \ref{Theo:gradient_dominant}}
\label{App:sub:gradient_dominant}
We derive $l_{\varphi_o}^{\text{lin}}(K)-l_{\varphi_o}^{\text{lin}}(K^{*})$
    \begin{align*}
        l_{\varphi_o}^{\text{lin}}(K)&-l_{\varphi_o}^{\text{lin}}(K^{*})=\mathbb{E}_{\substack{ x\sim \mathcal{D}\\z \sim \mathcal{N}(0,W_z)}}[J(\mu_{K})-J(\mu_{K^*})]\\
        &+\frac{\alpha}{2}\mathbb{E}_{\substack{x \sim \mathcal{D}\\z \sim \mathcal{N}(0,W_z)}}[ \|Kx+z - \mu_{\varphi_b}(x, z)\|_{2}^{2}-\|K^{*}x+z - \mu_{\varphi_b}(x, z)\|_{2}^{2}].
    \end{align*}
First, we find $\mathbb{E}_{\substack{ x\sim \mathcal{D}\\z \sim \mathcal{N}(0,W_z)}}J(\mu_{K}) $ 
 \begin{align*}
    \mathbb{E}_{\substack{ x\sim \mathcal{D}\\z \sim \mathcal{N}(0,W_z)}}J(\mu_{K})= \lim_{T \rightarrow \infty} \mathbb{E}_{\substack{ x\sim \mathcal{D}\\z \sim \mathcal{N}(0,W_z)}} \frac{1}{T} \sum_{k=0}^{T-1} x_{k}^{\top} R_x x_k +(Kx_{k}+z_k)^{\top} R_u (Kx_k+z_k).
 \end{align*}
Because of the ergodicity
\begin{align*}
    \mathbb{E}_{\substack{ x\sim \mathcal{D}\\z \sim \mathcal{N}(0,W_z)}}J(\mu_{K})&= \mathbb{E}_{\substack{ x\sim \mathcal{D}\\z \sim \mathcal{N}(0,W_z)}}[x^{\top} R_x x +(Kx+z)^{\top} R_u (Kx+z)]\\
    &=\mathbb{E}_{\substack{ x\sim \mathcal{D}}}[\mathbb{E}_{\substack{ z \sim \mathcal{N}(0,W_z)}}[x^{\top} R_x x +(Kx+z)^{\top} R_u (K+z)]]\\
    &=\mathbb{E}_{\substack{ x\sim \mathcal{D}}}[x^{\top} (R_x+ K^{\top}R_u K) x + \Tr(R_u W_z)]\\
    &= \Tr((R_x+ K^{\top}R_u K) \mathbb{E}_{\substack{ x\sim \mathcal{D}}}[x x^{\top}]) + \Tr(R_u W_z).
 \end{align*}
 As a result
 \begin{align}
    \begin{split}
        \mathbb{E}_{\substack{ x\sim \mathcal{D}\\z \sim \mathcal{N}(0,W_z)}}&[J(\mu_{K})-J(\mu_{K^*})]= \Tr((K^{\top}R_u K-K^{* \top }R_u K^{*}) \mathbb{E}_{\substack{ x\sim \mathcal{D}}}[x x^{\top}])\\
    &=\Tr(((K-K^*)^{\top}R_u (K-K^*)+2(K-K^*)R_u K^*) \mathbb{E}_{\substack{ x\sim \mathcal{D}}}[x x^{\top}]).
    \end{split}
    \label{eq:JK_JKstar}
 \end{align}
 Second, we derive $\mathbb{E}_{\substack{x \sim \mathcal{D}\\z \sim \mathcal{N}(0,W_z)}}[ \|Kx+z - \mu_{\varphi_b}(x, z)\|_{2}^{2}-\|K^{*}x+z - \mu_{\varphi_b}(x, z)\|_{2}^{2}]$
    \begin{align}  
        \begin{split}
            \mathbb{E}_{\substack{x \sim \mathcal{D}\\z \sim \mathcal{N}(0,W_z)}}&[ \|Kx+z - \mu_{\varphi_b}(x, z)\|_{2}^{2}-\|K^{*}x+z - \mu_{\varphi_b}(x, z)\|_{2}^{2}]\\
    =&\mathbb{E}_{\substack{x \sim \mathcal{D}\\z \sim \mathcal{N}(0,W_z)}}[x^{\top} (K-K^*)^{\top}(K-K^*) x]\\ &+2\mathbb{E}_{\substack{x \sim \mathcal{D}\\z \sim \mathcal{N}(0,W_z)}}[ x^{\top}(K-K^*)^{\top}(K^*x+z-\mu_{\varphi_b}(x, z))]\\
    =&\Tr((K-K^*)^{\top}(K-K^*) \mathbb{E}_{\substack{x \sim \mathcal{D}}}[x x^{\top}])
    +2\Tr((K-K^*)^{\top} K^* \mathbb{E}_{\substack{x \sim \mathcal{D}\\z \sim \mathcal{N}(0,W_z)}}[x x^{\top}]\\
    &-2\Tr((K-K^*)^{\top} \mathbb{E}_{\substack{x \sim \mathcal{D}\\z \sim \mathcal{N}(0,W_z)}}[\mu_{\varphi_b}(x, z)x^{\top}].
        \end{split}
    \label{eq:distillK_distillKstar}
    \end{align}
By combining \eqref{eq:JK_JKstar} and \eqref{eq:distillK_distillKstar}, we have
 \begin{align*}
    l_{\varphi_o}^{\text{lin}}(K)&-l_{\varphi_o}^{\text{lin}}(K^{*})\\
    =&\Tr(((K-K^*)^{\top}R_u (K-K^*)+2(K-K^*)R_u K^*) \mathbb{E}_{\substack{ x\sim \mathcal{D}}}[x x^{\top}])\\
    &+\frac{\alpha}{2}\Tr([(K-K^*)^{\top}(K-K^*)+2(K-K^*)^{\top} K^* ] \mathbb{E}_{\substack{x \sim \mathcal{D}}}[x x^{\top}])\\
&-2\alpha\Tr((K-K^*)^{\top} \mathbb{E}_{\substack{x \sim \mathcal{D}\\z \sim \mathcal{N}(0,W_z)}}[\mu_{\varphi_b}(x, z)x^{\top}]\\
    =&\Tr((K-K^*)^{\top}(R_u + \frac{\alpha}{2} I) (K-K^*) \mathbb{E}_{\substack{ x\sim \mathcal{D}}}[x x^{\top}])\\
    &+2\Tr((K-K^*)^{\top}\underbrace{[(R_u K^* + \alpha K^*) \mathbb{E}_{\substack{ x\sim \mathcal{D}}}[x x^{\top}]-\alpha \mathbb{E}_{\substack{x \sim \mathcal{D}\\z \sim \mathcal{N}(0,W_z)}}[\mu_{\varphi_b}])}_{=0}
 \end{align*}
where the last line is zero because of the optimality condition of $K^*$
    \begin{align*}
        &(R_u K^* + \alpha K^*) \mathbb{E}_{\substack{ x\sim \mathcal{D}}}[x x^{\top}]-\alpha \mathbb{E}_{\substack{x \sim \mathcal{D}\\z \sim \mathcal{N}(0,W_z)}}[\mu_{\varphi_b}]
        =0.5\nabla_K l_{\varphi_o}^{\text{lin}}(K^*)=0.
    \end{align*}
As a result
\begin{align*}
    l_{\varphi_o}^{\text{lin}}(K)&-l_{\varphi_o}^{\text{lin}}(K^{*})=\Tr[(K-K^*)^{\top}(R_u + \frac{\alpha}{2} I) (K-K^*) \mathbb{E}_{\substack{ x\sim \mathcal{D}}}[x x^{\top}]].
\end{align*}
Based on Lemma \ref{Lem:gradient} 
\begin{align}
(K-K^*)=& (S_{uu}(\theta) + \alpha I)^{-1} \nabla_K l_{\varphi_o}^{\text{lin}}(K) (\mathbb{E}_{\substack{x \sim \mathcal{D}}}[x   x^{\top}])^{-1}.
\label{eq:K_Kstar}
\end{align}
As a result, $l_{\varphi_o}^{\text{lin}}(K)-l_{\varphi_o}^{\text{lin}}(K^{*})$ reads
\begin{align*}
    l_{\varphi_o}^{\text{lin}}(K)-l_{\varphi_o}^{\text{lin}}(K^{*})
    =\Tr[&(((S_{uu}(\theta) + \alpha I)^{-1} \nabla_K l_{\varphi_o}^{\text{lin}}(K) (\mathbb{E}_{x \sim \mathcal{D}}[x   x^{\top}]))^{ -1})^{\top}\\
    &\times (R_u + \frac{\alpha}{2} I) (S_{uu}(\theta) + \alpha I)^{-1} \nabla_K l_{\varphi_o}^{\text{lin}}(K) ) ].
\end{align*}
Then, one gets the bound in \eqref{eq:mu}.

\end{document}